\title{
Type Size Code for Compressing Erd\"{o}s-R\'enyi Graphs \thanks{This research was conducted independently by the author.} }
\author[1]{Nematollah Iri\thanks{ nematollah.iri@asu.edu}}
\newtheorem{Theorem}{Theorem}
\newtheorem{Theorem*}{Theorem}
\newtheorem{Lemma}[Theorem]{Lemma}
\newtheorem{Lemma*}[Theorem]{Lemma}
\begin{document}

\maketitle

\begin{abstract}
We consider universal source coding of unlabeled graphs which are commonly referred to as graphical structures. We adopt an Erd\"{o}s-R\'enyi model to generate the random graphical structures. We propose a variant of the previously introduced Type Size code, where type classes are characterized based on the number of edges of the graphical structures. The proposed scheme sorts the graphical structures based on the size of their type classes and assigns binary sequences to them in this order. The $\epsilon$-coding rate of the Type Size code (up to the third-order term) for compressing graphical structures is derived.
\end{abstract}
%%%%%%%%%%%%%%%%%%%%%%%%%%%%%%%%%%%%%%%%%%%%%%%%%%%%%%%%%%%%%%%%%%%%%%%%%%%%%%%%%%%%%%%%%%%%%%%%%%%%%%%%%%%%
%%%%%%%%%%%%%%%%%%%%%%%%%%%%%%%%%%%%%%%%%%%%%%%%%%%%%%%%%%%%%%%%%%%%%%%%%%%%%%%%%%%%%%%%%%%%%%%%%%%%%%%%%%%%
%%%%%%%%%%%%%%%%%%%%%%%%%%%%%%%%%%%%%%%%%%%%%%%%%%%%%%%%%%%%%%%%%%%%%%%%%%%%%%%%%%%%%%%%%%%%%%%%%%%%%%%%%%%%
%%%%%%%%%%%%%%%%%%%%%%%%%%%%%%%%%%%%%%%%%%%%%%%%%%%%%%%%%%%%%%%%%%%%%%%%%%%%%%%%%%%%%%%%%%%%%%%%%%%%%%%%%%%%
%%%%%%%%%%%%%%%%%%%%%%%%%%%%%%%%%%%%%%%%%%%%%%%%%%%%%%%%%%%%%%%%%%%%%%%%%%%%%%%%%%%%%%%%%%%%%%%%%%%%%%%%%%%%
%%%%%%%%%%%%%%%%%%%%%%%%%%%%%%%%%%%%%%%%%%%%%%%%%%%%%%%%%%%%%%%%%%%%%%%%%%%%%%%%%%%%%%%%%%%%%%%%%%%%%%%%%%%%
%%%%%%%%%%%%%%%%%%%%%%%%%%%%%%%%%%%%%%%%%%%%%%%%%%%%%%%%%%%%%%%%%%%%%%%%%%%%%%%%%%%%%%%%%%%%%%%%%%%%%%%%%%%%
%%%%%%%%%%%%%%%%%%%%%%%%%%%%%%%%%%%%%%%%%%%%%%%%%%%%%%%%%%%%%%%%%%%%%%%%%%%%%%%%%%%%%%%%%%%%%%%%%%%%%%%%%%%%
%%%%%%%%%%%%%%%%%%%%%%%%%%%%%%%%%%%%%%%%%%%%%%%%%%%%%%%%%%%%%%%%%%%%%%%%%%%%%%%%%%%%%%%%%%%%%%%%%%%%%%%%%%%%
\section{Introduction}
Many problems in social networks, world wide web, recommendation systems, biology and etc., reduce to computations and processings over graphs. With the emergence of big data, such graphs may contain trillion edges. Moreover, next generation applications in the big data era, enforce more stringent I/O access and latency requirements. Therefore, new methods for compressing such massive graphs are essential for retrieval and processing over short time scales.

We aim at compressing the underlying graph up to isomorphism, i.e. to compress the structure of the graph. Many works, such as those in the area of graph summarization \cite{yike}, consider the compressed version of the graphical structure to be a graph itself \cite{fZhou}. Other works, such as \cite{ChoiwSzG, mohri, TomasZGSZ}, follow an information theoretic paradigm, where the graph structure is mapped into binary strings. We focus on lossless graph compression to faithfully recover the original graph structure from its encoded bits. Numerous lossless graph compression algorithms have been proposed in the literature. See \cite{Besta}, for an exhaustive survey on the existing works, .

Prefix-free assumption has been traditionally imposed to unambiguously decode the block of codewords. However, for many applications in storage and retrieval, there are out-of-band markers to navigate the boundaries of the data to be compressed, hence, one may relax the prefix-free assumption for compression. Such a compression is referred to as one-to-one coding \cite{szpan}.

We follow a \emph{universal} one-to-one source coding setup, where the underlying probability distribution generating the data is arduous to estimate or unknown, yet presumed to belong to a known class of distributions. A universal one-to-one compression scheme has been first introduced in \cite{oliverLalithaJournal}--- the Type Size (TS) code--- to optimally compress the class of all stationary memoryless sources. TS code, universally \emph{orders} sequences based on the size of their type classes, and subsequently map them to binary strings in a lexicographic order. The result in \cite{oliverLalithaJournal} shows a gain of logarithmic order in the input size by relaxing the prefix-free assumption.

There is an underlying flexibility in defining type classes from a TS code perspective. In fact, the one-to-one compression problem is equivalent to characterizing the type classes that lead to optimal performance \cite{nematCISS}. Characterizing the type classes based on the empirical probability mass function of the sequences is shown to be optimal for compression of the class of \emph{all} i.i.d. \cite{oliverLalithaJournal} and Markov \cite{nemat} sources over a finite alphabet, while the quantized type classes \cite{nematKosutJournal} are shown to be optimal for compression of the parametric exponential family of distributions.

We adopt the TS approach for compressing unlabeled Erd\"{o}s-R\'enyi graphs, which we refer to as graphical \emph{structures}, with an unknown edge probability of $p$. We define two graphical structures to be in the same type class if and only if they have the same number of edges. Note that two graphical structures within a type class have the same probability regardless of the enforced Erd\"{o}s-R\'enyi model. We order the graphical structures based on the size of the type class they belong to, from smallest to largest, and subsequently map them to binary strings lexicographically (See Figure \ref{TScodeGraphs} for an example). We show that with probability at least $(1-\epsilon)$, the TS code requires at most
\begin{equation}
\label{thisEquationFirstt}
{n\choose 2} H(p) + \sigma(p)\sqrt{{n\choose 2}}Q^{-1}(\epsilon) -\log{n!} + \mathcal{O}(1)
\end{equation}
bits, to compress a graphical structure with $n$ nodes, where $H(p)$ and $\sigma^2(p)$ are the entropy and the varentropy of the underlying Erd\"{o}s-R\'enyi model $p$, respectively. The first and third-order terms in (\ref{thisEquationFirstt}) do match with an earlier result in \cite{ChoiwSzG}. However, we depart from an average case analysis in \cite{ChoiwSzG} to a probabilistic analysis.

The required memory to store the TS code ordering grows super-exponentially with the graph size \footnote{The size of a graph is the number of its nodes.}. Even though, this memory requirement may sound prohibitive, however, many applications may afford this storage requirement in order to achieve the rate optimality. A few examples include but are not limited to
\begin{itemize}
\item Long distance communications such as satellite communications,
\item Transmission over VLF and ELF frequencies which offer a very low bandwidth. Optimal-rate compression comes into awareness in such scenarios, since one can transmit only a limited few (hundreds) bits in a minute.
\end{itemize}

The rest of the paper is organized as follows. We introduce the lossless source coding of the graphical structures and related definitions in Section \ref{PrelimSection}. In Section \ref{TSCodeSection}, we describe the Type Size code and provide tight bounds on the type class sizes. In Section \ref{mainResult}, we present the main theorem of the paper which is proved in Section \ref{ThmProof}. We conclude in Section \ref{SecConclusion}.

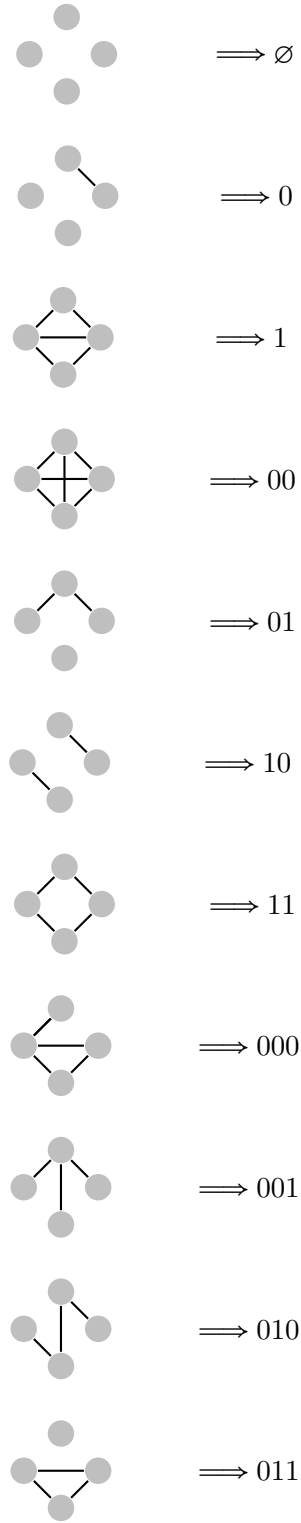
\begin{figure}[t]
\label{TScodeGraphs}
\begin{centering}

\begin{tikzpicture}[auto, node distance=0.7cm, every loop/.style={},
                    thick,main node/.style={circle,fill=black!25,minimum size=10pt,inner sep=0pt}]
  \node[main node] (1) {};
  \node[main node] (2) [below left of=1] {};
  \node[main node] (3) [below right of=2] {};
  \node[main node] (4) [below right of=1] {};
  \node at (2.5,-0.5) {$\Longrightarrow \emptyset$};
\end{tikzpicture}

\vspace{0.2in}

\begin{tikzpicture}[auto, node distance=0.7cm, every loop/.style={},
                    thick,main node/.style={circle,fill=black!25,minimum size=10pt,inner sep=0pt}]
  \node[main node] (1) {};
  \node[main node] (2) [below left of=1] {};
  \node[main node] (3) [below right of=2] {};
  \node[main node] (4) [below right of=1] {};

  \path[every node/.style={minimum size=5pt}]
    (1) edge node [left] {} (4);
  \node at (2.5,-0.5) {$\Longrightarrow 0$};
\end{tikzpicture}

\vspace{0.2in}

\begin{tikzpicture}[auto, node distance=0.7cm, every loop/.style={},
                    thick,main node/.style={circle,fill=black!25,minimum size=10pt,inner sep=0pt}]
  \node[main node] (1) {};
  \node[main node] (2) [below left of=1] {};
  \node[main node] (3) [below right of=2] {};
  \node[main node] (4) [below right of=1] {};

  \path[every node/.style={minimum size=5pt}]
    (1) edge node [left] {} (4)
    (2) edge node [right] {} (1)
        edge node [left] {} (3)
        edge node [right] {} (4)
    (3) edge node [right] {} (4);

  \node at (2.5,-0.5) {$\Longrightarrow 1$};
\end{tikzpicture}
\vspace{0.2in}

\begin{tikzpicture}[auto, node distance=0.7cm, every loop/.style={},
                    thick,main node/.style={circle,fill=black!25,minimum size=10pt,inner sep=0pt}]
  \node[main node] (1) {};
  \node[main node] (2) [below left of=1] {};
  \node[main node] (3) [below right of=2] {};
  \node[main node] (4) [below right of=1] {};

  \path[every node/.style={minimum size=5pt}]
    (1) edge node [left] {} (4)
    (2) edge node [right] {} (1)
        edge node [left] {} (3)
        edge node [right] {} (4)
    (3) edge node [right] {} (4)
        edge node [right] {} (1);;

  \node at (2.5,-0.5) {$\Longrightarrow 00$};
\end{tikzpicture}

\vspace{0.2in}
\begin{tikzpicture}[auto, node distance=0.7cm, every loop/.style={},
                    thick,main node/.style={circle,fill=black!25,minimum size=10pt,inner sep=0pt}]
  \node[main node] (1) {};
  \node[main node] (2) [below left of=1] {};
  \node[main node] (3) [below right of=2] {};
  \node[main node] (4) [below right of=1] {};

  \path[every node/.style={minimum size=5pt}]
    (1) edge node [left] {} (4)
    (2) edge node [right] {} (1);
  \node at (2.5,-0.5) {$\Longrightarrow 01$};
\end{tikzpicture}

\vspace{0.2in}

\begin{tikzpicture}[auto, node distance=0.7cm, every loop/.style={},
                    thick,main node/.style={circle,fill=black!25,minimum size=10pt,inner sep=0pt}]
  \node[main node] (1) {};
  \node[main node] (2) [below left of=1] {};
  \node[main node] (3) [below right of=2] {};
  \node[main node] (4) [below right of=1] {};

  \path[every node/.style={minimum size=5pt}]
    (1) edge node [left] {} (4)
    (2) edge node [right] {} (3);
  \node at (2.5,-0.5) {$\Longrightarrow 10$};
\end{tikzpicture}
\vspace{0.2in}

\begin{tikzpicture}[auto, node distance=0.7cm, every loop/.style={},
                    thick,main node/.style={circle,fill=black!25,minimum size=10pt,inner sep=0pt}]
  \node[main node] (1) {};
  \node[main node] (2) [below left of=1] {};
  \node[main node] (3) [below right of=2] {};
  \node[main node] (4) [below right of=1] {};

  \path[every node/.style={minimum size=5pt}]
    (1) edge node [left] {} (4)
    (2) edge node [right] {} (1)
        edge node [left] {} (3)
    (3) edge node [right] {} (4);

  \node at (2.5,-0.5) {$\Longrightarrow 11$};
\end{tikzpicture}

\vspace{0.2in}

\begin{tikzpicture}[auto, node distance=0.7cm, every loop/.style={},
                    thick,main node/.style={circle,fill=black!25,minimum size=10pt,inner sep=0pt}]
  \node[main node] (1) {};
  \node[main node] (2) [below left of=1] {};
  \node[main node] (3) [below right of=2] {};
  \node[main node] (4) [below right of=1] {};

  \path[every node/.style={minimum size=5pt}]
    (1) edge node [left] {} (2)
    (2) edge node [right] {} (1)
        edge node [left] {} (3)
        edge node [right] {} (4)
    (3) edge node [right] {} (4);

  \node at (2.5,-0.5) {$\Longrightarrow 000$};
\end{tikzpicture}

\vspace{0.2in}

\begin{tikzpicture}[auto, node distance=0.7cm, every loop/.style={},
                    thick,main node/.style={circle,fill=black!25,minimum size=10pt,inner sep=0pt}]
  \node[main node] (1) {};
  \node[main node] (2) [below left of=1] {};
  \node[main node] (3) [below right of=2] {};
  \node[main node] (4) [below right of=1] {};

  \path[every node/.style={minimum size=5pt}]
    (1) edge node [left] {} (4)
    (1) edge node [right] {} (3)
    (2) edge node [right] {} (1);
  \node at (2.5,-0.5) {$\Longrightarrow 001$};
\end{tikzpicture}

\vspace{0.2in}

\begin{tikzpicture}[auto, node distance=0.7cm, every loop/.style={},
                    thick,main node/.style={circle,fill=black!25,minimum size=10pt,inner sep=0pt}]
  \node[main node] (1) {};
  \node[main node] (2) [below left of=1] {};
  \node[main node] (3) [below right of=2] {};
  \node[main node] (4) [below right of=1] {};

  \path[every node/.style={minimum size=5pt}]
    (1) edge node [left] {} (4)
    (1) edge node [right] {} (3)
    (2) edge node [right] {} (3);
  \node at (2.5,-0.5) {$\Longrightarrow 010$};
\end{tikzpicture}

\vspace{0.2in}

\begin{tikzpicture}[auto, node distance=0.7cm, every loop/.style={},
                    thick,main node/.style={circle,fill=black!25,minimum size=10pt,inner sep=0pt}]
  \node[main node] (1) {};
  \node[main node] (2) [below left of=1] {};
  \node[main node] (3) [below right of=2] {};
  \node[main node] (4) [below right of=1] {};

  \path[every node/.style={minimum size=5pt}]
    (2) edge node [left] {} (3)
    (2) edge node [right] {} (4)
    (3) edge node [right] {} (4);
  \node at (2.5,-0.5) {$\Longrightarrow 011$};
\end{tikzpicture}

\end{centering}
\caption{TS code for graphical structures of size $n=4$.}
\end{figure}

\section{Preliminaries}
\label{PrelimSection}
We adopt an Erd\"{o}s-R\'enyi graph generation mechanism \cite{bollobas}, i.e., there is a simple undirected edge between pairs of nodes, independently with probability $p$. The structure of the graph is derived by removing the vertex labels. Let $\mathcal{S}(n)$ (resp. $\mathcal{G}(n)$) be the set of simple Erd\"{o}s-R\'enyi graphical structures (resp. labeled graphs) on $n$ vertices. For a structure $S\in\mathcal{S}(n)$, we define it's size as the number of vertices and $\j(S)$ captures the number of edges in $S$. When it is clear from the context we simplify $\j(S)$ as $\j$. For notational convenience, denote $m={n\choose{2}}$. Let $\pi = \pi_1,\pi_2,...,\pi_m$ be an arbitrary ordering of pairs of nodes in $G$. Construct a sequence $X_i,\: i=1,\cdots, m$ as follows: $X_i=1$, if there is an edge between pair of nodes corresponding to $\pi_i$, and 0 otherwise. Let $x_i, i=1,...,m$ denote a realization of $X_i$ for an observed $G$. Since $X_i$'s are i.i.d., let $X$ (resp. $p(X)$) be a random variable (resp. probability distribution) representing the underlying model of $X_i$'s, i.e. $p(X=1) = p(X_i=1)=p, i=1\cdots m$. Denote $\mathcal{B}$ as the class of all Bernoulli distributions over $\{0,1\}$. Let $\mathbb{E}$ and $\mathbb{V}$, denote expectation and variance with respect to $p(X)$, respectively.

We consider a source coding scheme which maps graphical structures in $\mathcal{S}(n)$ to variable length binary strings via a coding function
\begin{equation*}
\phi: \mathcal{S}(n)\rightarrow \{0,1\}^* = \{\emptyset, 0, 1, 00, 01,10,11,000, \cdot\cdot\cdot\}.
\end{equation*}

We do not impose the prefix-free condition on the coding scheme. Let $l(\phi(S))$ be the length of the compressed binary string. For any $S\in\mathcal{S}(n)$, the underlying Erd\"{o}s-R\'enyi model induces a probability distribution $\mathbb{P}_S$ on the structures within $\mathcal{S}$. We evaluate the performance of coding schemes through the $\epsilon$-coding rate for graphs of size $n$ given by

\begin{equation}
R_n(\epsilon,\phi, p)=\frac{1}{{n\choose 2}}\min\Big\{k: \mathbb{P}_S\big(\ell(\phi(S))\ge k \big)\le\epsilon\Big\}.
\end{equation}

\section{Type Size Code}
\label{TSCodeSection}
\subsection{Type Size Code}
\label{subsec::TSCSIze}
For the class of all memoryless sources over a finite alphabet, the fixed-to-variable TS code
is introduced in \cite{oliverLalithaJournal}, which sorts sequences based on the size of their elementary type classes (from
smallest to largest) and then encodes sequences to variable-length bit-strings in this order. We borrow the framework of the TS code, however, for the purpose of compressing graphical structures we define two graphical structures to be in the same type class if and only if they have the same number of edges, i.e. $T_S = \left\{S'\in\mathcal{S}(n): \j(S) = \j(S')\right\}$, where $T_S$ denotes the type class of $S$. We then sort graphical structures based on their type class sizes and map them to binary strings according to this type class size ordering. See Figure \ref{TScodeGraphs} for an example of the TS code for compressing graphical structures of size $n=4$.

\begin{Theorem}\cite{oliverLalithaJournal}
For the TS code
\begin{equation}
R_n(\epsilon,\phi,p)\leq \frac{1}{{n\choose 2}}\lceil\log{M(\epsilon)}\rceil
\end{equation}
where
\begin{equation}
\label{mEq}
M(\epsilon)=\inf_{\gamma:\mathbb{P}_S\left(\frac{1}{{n\choose 2}}\log{|T_{S}|}>\gamma\right)\leq \epsilon}\sum_{\substack{S\in\mathcal{S}(n): \\ \frac{1}{{n\choose 2}}\log{|T_{S}|}\leq \gamma} }{|T_{S}|}.
\end{equation}
\end{Theorem}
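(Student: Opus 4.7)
The plan is to analyze the TS construction directly: bound the codeword length of each structure $S$ by a function of $|T_S|$, and then split the probability space by a threshold on $|T_S|$ to absorb the tail event into the $\epsilon$-budget.

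First I would record the length-versus-rank identity for lex-ordered binary strings: enumerating $\{0,1\}^*$ as $\emptyset,0,1,00,01,10,11,000,\ldots$ places the string of length $\lfloor\log_2 r\rfloor$ at position $r$ (counting $r=1$ for $\emptyset$). Writing $r(S)$ for the rank of $S$ in the TS ordering (ties within a single type class broken arbitrarily), this gives $\ell(\phi(S))=\lfloor\log_2 r(S)\rfloor\le\log_2 r(S)$. Next I would bound the rank using only the TS ordering rule: $S'$ precedes $S$ only when $|T_{S'}|\le|T_S|$, so
\[
r(S)\;\le\;\bigl|\{S'\in\mathcal{S}(n):|T_{S'}|\le|T_S|\}\bigr|\;\le\;\sum_{\substack{S'\in\mathcal{S}(n):\\|T_{S'}|\le|T_S|}}|T_{S'}|,
\]
where the final inequality, which merely uses $|T_{S'}|\ge 1$, reproduces exactly the form of the sum appearing inside $M(\epsilon)$ in (\ref{mEq}).

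Now I would fix any $\gamma$ satisfying $\mathbb{P}_S\bigl(\frac{1}{\binom{n}{2}}\log|T_S|>\gamma\bigr)\le\epsilon$ and pass to the typical set $A_\gamma=\{S:|T_S|\le 2^{\binom{n}{2}\gamma}\}$, which has probability at least $1-\epsilon$. For every $S\in A_\gamma$ the displayed inequality yields $r(S)\le N_\gamma$, where $N_\gamma$ is the sum inside (\ref{mEq}) for this $\gamma$, and hence $\ell(\phi(S))\le\lfloor\log_2 N_\gamma\rfloor$ on $A_\gamma$. Therefore $\mathbb{P}_S\bigl(\ell(\phi(S))\ge\lceil\log_2 N_\gamma\rceil\bigr)\le\epsilon$, and the definition of $R_n(\epsilon,\phi,p)$ immediately gives $R_n(\epsilon,\phi,p)\le\frac{1}{\binom{n}{2}}\lceil\log N_\gamma\rceil$. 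Taking the infimum over admissible $\gamma$ converts $N_\gamma$ into $M(\epsilon)$ and finishes the proof.

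There is no substantial obstacle; the theorem is essentially bookkeeping around the length-versus-rank identity and the typicality split. The only modeling choice worth flagging is the use of the slack inequality $|T_{S'}|\ge 1$ in the rank bound, which replaces the tight count $\sum 1$ with the looser $\sum|T_{S'}|$. This looser form is preferable because in the graphical setting the sum collapses cleanly to $\sum_{j:|T_j|\le 2^{m\gamma}}|T_j|^2$ indexed by edge count, which feeds directly into the Gaussian (CLT-type) analysis of $\j(S)$ needed for the third-order rate expression in (\ref{thisEquationFirstt}).
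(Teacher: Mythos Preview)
The paper does not prove this theorem; it quotes it from \cite{oliverLalithaJournal} and immediately uses it as a black box in Section~\ref{ThmProof}. Your argument is the standard direct verification and is correct: the rank bound $r(S)\le|\{S':|T_{S'}|\le|T_S|\}|$ together with the length--rank identity $\ell(\phi(S))=\lfloor\log_2 r(S)\rfloor$ and the typicality split on $\{|T_S|\le 2^{m\gamma}\}$ is exactly how this result is established in the original reference. One cosmetic point: from $\ell(\phi(S))\le\lfloor\log_2 N_\gamma\rfloor$ on $A_\gamma$ you get $\mathbb{P}_S(\ell\ge\lfloor\log_2 N_\gamma\rfloor+1)\le\epsilon$, which gives the stated ceiling bound except possibly when $N_\gamma$ is an exact power of two; since here $N_\gamma$ ranges over finitely many integer values and the infimum in $M(\epsilon)$ is attained, this is harmless.

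Your closing remark, however, misreads how the paper exploits the bound. The sum in (\ref{mEq}) does collapse to $\sum_j|T_j|^2$ over edge counts, but the paper never uses that form. Instead it replaces $|T_S|$ by the pointwise upper bound $2^{mf(S)}$ from Lemma~\ref{TSLEmma}, slices the resulting sum over structures into shells $\mathcal{A}_i$ of width $\Delta=1/m$ in $f$, and controls the \emph{number} of structures per shell by a volume argument on the simplex of Bernoulli distributions (Lemmas~\ref{fIsLipschitz} and~\ref{rhoIsLipschitz}). The Berry--Ess\'een/CLT step enters only in verifying that the chosen $\gamma$ is admissible, not in evaluating the sum.
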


\subsection{Type Class Size}
\label{subsectionTYPECLASSSIZE}
Let $N(n,\j)$ be the number of graphs with $n$ unlabeled nodes and $\j$ simple undirected edges. Let
\begin{equation}
\Lambda(n,\j)=\frac{{{{{n}\choose{2}}}\choose{\j}}}{n!} \label{LambdaEquatidsjh}
\end{equation}
and
\begin{equation*}
\mu = \frac{2\j}{n} - \log n.
\end{equation*}
The following theorem by Wright \cite{emWright}, gives the number of graphical structures for a given number of nodes and edges.
\begin{Theorem}\label{emWrightThm}\cite{emWright}
For a constant $C_0$ independent of $n$,
$$N(n,\j) = \Lambda(n,\j)\left(1+\mathcal{O}\left(e^{-C_0\mu}\right)\right),$$
if and only if $\mu\rightarrow\infty$ as $n\rightarrow\infty$.
\end{Theorem}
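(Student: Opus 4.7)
The plan is to invoke the Cauchy--Frobenius (Burnside) lemma. The symmetric group $S_n$ acts on labeled graphs on $n$ vertices with exactly $\j$ edges by vertex relabeling, and the unlabeled structures counted by $N(n,\j)$ are in bijection with the orbits. Hence
\[
N(n,\j) \;=\; \frac{1}{n!}\sum_{\sigma \in S_n} F(\sigma),
\]
where $F(\sigma)$ is the number of labeled graphs with $\j$ edges that are fixed by the induced action of $\sigma$ on unordered pairs of vertices. The identity permutation contributes $F(\mathrm{id}) = \binom{m}{\j}$, producing precisely the main term $\Lambda(n,\j)$, and the theorem reduces to showing
\[
\frac{1}{n!}\sum_{\sigma \neq \mathrm{id}} F(\sigma) \;=\; \mathcal{O}\!\left(\Lambda(n,\j)\, e^{-C_0 \mu}\right).
\]

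Next I would stratify the remaining sum by the cycle type of $\sigma$. A graph is $\sigma$-invariant iff its edge set is a union of $\sigma$-orbits on pairs, so if $a_k(\sigma)$ denotes the number of size-$k$ pair-orbits, then $F(\sigma) = \sum_{\{t_k\}:\sum k t_k = \j} \prod_k \binom{a_k(\sigma)}{t_k}$. The dominant non-identity contribution comes from the $\binom{n}{2}$ transpositions $\sigma = (i\,j)$: each such $\sigma$ has one fixed pair $\{i,j\}$, $\binom{n-2}{2}$ additional fixed pairs, and $n-2$ size-$2$ orbits, yielding $F(\sigma) = \sum_{t\ge 0}\binom{n-2}{t}\binom{m-n+1}{\j-2t}$. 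Comparing to $\binom{m}{\j}$ via the identity $\binom{m-k}{\ell}/\binom{m}{\ell} = \prod_{i=0}^{k-1}(1-\ell/(m-i))$ and using $p = \j/m$, one obtains $F(\sigma)/\binom{m}{\j} \sim (1-p)^{n-1}(1+p^2)^{n-2}$; since $(n-1)p \sim 2\j/n$, this is $\sim e^{-2\j/n}$ up to negligible corrections. Multiplying by the $\binom{n}{2}$ transpositions and using $-2\j/n = -\mu - \log n$ gives a contribution on the order of $e^{-\mu}$ times polynomial-in-$n$ factors, which collapses to $e^{-C_0 \mu}$ for a suitable $C_0$ once $\mu$ grows sufficiently fast.

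The main obstacle will be controlling permutations of higher cycle complexity and showing they are genuinely lower order. I plan to stratify by $d(\sigma)$, the number of vertices moved by $\sigma$: each additional displaced vertex fuses on the order of $\Theta(n)$ pairs into non-trivial orbits, incurring a further exponential reduction of $F(\sigma)/\binom{m}{\j}$, while the number of permutations with a given $d$ grows only as $n^{d(1+o(1))}$. A geometric-series comparison then confirms that the transposition contribution dominates the full sum. For the necessity direction, I would exhibit an explicit family of sparse graphical structures containing isolated vertices or duplicated pendant components: when $\mu$ stays bounded, their automorphism-induced multiplicity keeps $N(n,\j)/\Lambda(n,\j) - 1$ bounded away from zero by an amount that cannot be captured by any $e^{-C_0\mu}$ bound, thereby forcing $\mu \to \infty$ as a prerequisite for the asymptotic.
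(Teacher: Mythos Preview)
The paper does not supply a proof of this theorem; it is quoted as a black-box result from Wright \cite{emWright} and used only to justify equation (\ref{tsGraph}). There is therefore no in-paper argument to compare your proposal against.

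That said, your Cauchy--Frobenius plan is the standard route to results of this type and is in the spirit of Wright's original argument. Two small computational slips are worth flagging before you flesh it out. First, under a transposition $(i\,j)$ the number of fixed pairs is $1+\binom{n-2}{2}=m-2(n-2)$, not $m-n+1$, so the inner binomial in your $F(\sigma)$ formula needs adjusting. Second, the asymptotic $F(\sigma)/\binom{m}{\j}\sim(1-p)^{n-1}(1+p^2)^{n-2}$ is not the right closed form: since $\sigma$-invariance forces each size-$2$ pair-orbit to be entirely present or entirely absent, the cleaner target (via a local-limit or saddle-point comparison of coefficients) is $(p^2+(1-p)^2)^{n-2}=(1-2p(1-p))^{n-2}$. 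This still gives the exponential scale $e^{-\Theta(np)}=e^{-\Theta(\mu+\log n)}$ per transposition, and after multiplying by the $\binom{n}{2}$ transpositions you recover the $\mathcal{O}(e^{-C_0\mu})$ correction as claimed. Your stratification of the remaining permutations by the number of moved vertices is the right mechanism for showing the transposition stratum dominates. The necessity direction in Wright's paper is somewhat more delicate than exhibiting a single explicit family, so expect that half to require additional care.
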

Let $\mathcal{E}$ be the event where the condition of Theorem \ref{emWrightThm} is not satisfied, i.e. $\lim_{n\rightarrow\infty}\mu < \infty$. In the following lemma, we show that the necessary and sufficient condition of Theorem \ref{emWrightThm} is satisfied with high probability. The proof is an straightforward application of Chernoff bound \cite{hoeffdingW} and is provided in Appendix \ref{ChernoffAppendix}.

\begin{Lemma}
\label{ChrnofNumEdge}
There exist constants $0<\delta_1, \delta_2<1$ which are independent of $n$, such that
\begin{equation*}
\mathbb{P}_S\left(\j(S)\leq (1-\delta_1){n\choose 2}p\right) \leq e^{-{n\choose 2}\delta_2}.
\end{equation*}
\end{Lemma}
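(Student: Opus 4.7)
The plan is to exploit the fact that, under the setup of Section~\ref{PrelimSection}, the edge count $\j(S)$ is a sum of $m=\binom{n}{2}$ i.i.d.\ Bernoulli($p$) random variables $X_1,\dots,X_m$, so $\mathbb{E}[\j(S)]=\binom{n}{2}p$ and the claim is just a one-sided concentration inequality around the mean.

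First, I would invoke the standard multiplicative Chernoff bound: for any fixed $\delta\in(0,1)$,
\[
\mathbb{P}\!\left(\sum_{i=1}^{m}X_i\le(1-\delta)mp\right)\le\exp\!\left(-\frac{\delta^{2}mp}{2}\right).
\]
Substituting $m=\binom{n}{2}$ and fixing an arbitrary $\delta_1\in(0,1)$ independent of $n$ yields
\[
\mathbb{P}_S\!\left(\j(S)\le(1-\delta_1)\binom{n}{2}p\right)\le\exp\!\left(-\binom{n}{2}\cdot\frac{\delta_1^{2}p}{2}\right),
\]
so the lemma follows by setting $\delta_2=\min\{\delta_1^{2}p/2,\,1/2\}$, which is strictly between $0$ and $1$ and independent of $n$.

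I do not anticipate any substantive technical obstacle: the argument is a single application of Chernoff's inequality once one recognises the binomial structure of $\j(S)$. The only point of care is ensuring that $\delta_1$ and $\delta_2$ lie in $(0,1)$ uniformly in $n$, which is immediate from the closed form above. Although not part of the statement, the motivation for the lemma is that Wright's hypothesis holds typically: on the high-probability event $\{\j(S)\ge(1-\delta_1)\binom{n}{2}p\}$, one has $\mu\ge(1-\delta_1)(n-1)p-\log n\to\infty$, so Theorem~\ref{emWrightThm} may subsequently be invoked on typical graphical structures and the event $\mathcal{E}$ contributes only a vanishing probability mass to the coding-rate analysis.
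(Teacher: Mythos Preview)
Your proposal is correct and follows essentially the same route as the paper: both recognise that $\j(S)=\sum_{i=1}^{m}X_i$ is a binomial sum and apply a Chernoff-type tail bound to obtain exponential concentration below the mean. The only cosmetic difference is that the paper derives the bound from the moment generating function and optimises over $t$ to obtain the relative-entropy exponent $p\big(\delta_1+(1-\delta_1)\ln(1-\delta_1)\big)$, whereas you invoke the standard multiplicative Chernoff inequality directly to get the slightly looser exponent $\delta_1^{2}p/2$; either suffices for the lemma, and your explicit truncation $\delta_2=\min\{\delta_1^{2}p/2,\,1/2\}$ cleanly ensures $\delta_2\in(0,1)$.
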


Size of the type class of $S$, $|T_S|$ is then given by
\begin{align}
\log{|T_S|}&=\log{N(n,\j)} \nonumber \\
                &=\log{{{{{{n}\choose{2}}}\choose{\j}}}}-\log{n!}+\mathcal{O}(1) \label{tsGraph}
\end{align}
where $\mathcal{O}(1)$ term is (with an abuse of notation) bounded between two positive constants independent of $n$.

Let $\j^c = m-\j$. Define \emph{empirical entropy} of the graphical structure $S$ as
\begin{equation*}
H_{\text{empirical}}(S) = -\frac{\j}{m} \log{\frac{\j}{m}} - \frac{\j^c}{m} \log{\frac{\j^c}{m}}.
\end{equation*}

The following lemma provides upper and lower bounds on the size of the graphical type class.
\begin{Lemma}
\label{TSLEmma}
With probability at least $1-e^{-{n\choose 2}\delta_2}$, we have the following upper and lower bounds for the size of the type class of a graphical structure $S\in\mathcal{S}(n)$:
\begin{equation*}
{n\choose{2}}H_{\text{empirical}}(S)-\log n! +C_L \leq \log{|T_{S}|} \leq {n\choose{2}}H_{\text{empirical}}(S)-\log {n!} + C_U
\end{equation*}
where $C_L, C_U$ are constants independent of $n$ and $\delta_2$ is the constant in Lemma \ref{ChrnofNumEdge}.
\end{Lemma}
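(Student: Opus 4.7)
The plan is to leverage equation (\ref{tsGraph}) — which is already Wright's theorem applied to $\log N(n,\j)$ — and reduce the lemma to evaluating $\log\binom{m}{\j}$ by Stirling's formula. Two items need checking: that Wright's hypothesis $\mu\to\infty$ holds with the stated probability, and that the Stirling expansion is valid uniformly in $\j$ on that event.

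For the first, I would upgrade Lemma \ref{ChrnofNumEdge} to a two-sided concentration on $\j$. As stated, the lemma bounds $\j$ from below; applying the same Chernoff argument to the complementary indicators $1-X_i\sim$ Bernoulli$(1-p)$ yields the matching upper tail $\j \le m - (1-\delta_1)m(1-p)$ outside an exponentially small set. A union bound then produces an event $\mathcal{A}$ of probability at least $1-e^{-\binom{n}{2}\delta_2}$ (shrinking $\delta_2$ slightly if necessary) on which both $\j/m$ and $\j^c/m$ lie in a fixed compact subinterval of $(0,1)$. On $\mathcal{A}$ we have $\j = \Theta(n^2)$, so $\mu = 2\j/n - \log n$ grows linearly in $n$; the exceptional event $\mathcal{E}$ of Theorem \ref{emWrightThm} does not occur and (\ref{tsGraph}) is in force.

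For the second step, on $\mathcal{A}$ I would apply Stirling's formula $\log k! = k\log k - k + \tfrac{1}{2}\log(2\pi k) + O(1/k)$ separately to $m!$, $\j!$, $\j^c!$ inside $\log\binom{m}{\j}$. The $k\log k - k$ contributions consolidate into the empirical entropy,
\[
m\log m - \j\log\j - \j^c\log\j^c \;=\; \binom{n}{2}H_{\text{empirical}}(S),
\]
while the remaining half-log and $O(1/k)$ residuals, evaluated under $\j,\j^c = \Theta(m)$, are controlled uniformly. Combining these with the bounded remainder from (\ref{tsGraph}) and lumping all bounded contributions into $C_L$ and $C_U$ gives the claimed sandwich.

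The principal obstacle is the two-sided upgrade of the Chernoff estimate in the first step, since Lemma \ref{ChrnofNumEdge} is stated only one-sidedly; once $\j/m$ is pinned into a compact subinterval of $(0,1)$, the verification of Wright's hypothesis and the Stirling bookkeeping are both mechanical.
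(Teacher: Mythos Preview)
Your plan mirrors the paper's: invoke Wright's theorem to obtain (\ref{tsGraph}), then expand $\log\binom{m}{\j}$ by Stirling and recognize $m\log m-\j\log\j-\j^c\log\j^c=mH_{\text{empirical}}(S)$. The one substantive difference is that you propose a two-sided Chernoff upgrade of Lemma~\ref{ChrnofNumEdge} in order to force $\j,\j^c=\Theta(m)$, whereas the paper keeps the bound one-sided. Only the lower tail on $\j$ is needed for Wright's hypothesis (since $\mu=2\j/n-\log n$ requires $\j$ large, not $\j^c$ large), and for the Stirling residual the paper bounds $\tfrac{1}{2}(\log\j+\log\j^c)$ directly via the elementary inequality $\j\,\j^c\ge m-1$, valid for every $1\le\j\le m-1$; the degenerate cases $\j\in\{0,m\}$ are dispatched by two header bits. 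So the two-sided upgrade you flag as the ``principal obstacle'' is actually avoidable.

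There is, however, a gap in your Stirling bookkeeping: even on the event $\j,\j^c=\Theta(m)$, the half-log residual
\[
\tfrac12\log(2\pi m)-\tfrac12\log(2\pi\j)-\tfrac12\log(2\pi\j^c)=\tfrac12\log\frac{m}{2\pi\,\j\,\j^c}
\]
is of order $-\tfrac12\log m$, not $O(1)$, because $\j\,\j^c=\Theta(m^2)$. Hence the lower constant $C_L$ cannot literally be independent of $n$; the sharp sandwich is $mH_{\text{empirical}}(S)-\tfrac12\log m+O(1)\le\log\binom{m}{\j}\le mH_{\text{empirical}}(S)+O(1)$. The paper's own lower-bound step has the same slip---its claimed inequality $\log\j+\log\j^c\le\tfrac12\log\binom{n}{2}$ is false, the correct bound being $\le 2\log(m/2)$---but the stray $\tfrac12\log m$ perturbs the final rate only by $O(n^{-2}\log n)$ and is absorbed in the theorem's $\mathcal{O}(1/n^2)$ term.
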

\begin{proof}
See Appendix \ref{AppendixTSLEmma}.
\end{proof}

\section{Main Result}
\label{mainResult}
Let $H(p) = \mathbb{E}\left(\log{\frac{1}{p(X)}}\right)$ be the entropy of the underlying source generating the Erd\"{o}s-R\'enyi graph, and $\sigma^2(p)=\mathbb{V}\left(\log{\frac{1}{p(X)}}\right)$ be the varentropy of it. The following theorem provides an achievability bound for the rate of the TS code in compressing graphical structures.

\begin{Theorem}
For the TS code and any Bernoulli distribution $p\in\mathcal{B}$,
\begin{equation*}
R_n(\epsilon,\phi,p) \leq  H(p) + \frac{\sigma(p)}{\sqrt{{n\choose 2}}}Q^{-1}(\epsilon) -\frac{\log{n!}}{{n\choose 2}} + \mathcal{O}\left(\frac{1}{n^2}\right).
\end{equation*}
\end{Theorem}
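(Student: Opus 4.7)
The plan is to instantiate the bound $R_n(\epsilon,\phi,p)\le \frac{1}{\binom{n}{2}}\lceil\log M(\epsilon)\rceil$ of Theorem~1 at a carefully chosen threshold $\gamma$ and read the rate off from $\gamma$. Write $m=\binom{n}{2}$ and $\hat p=\j(S)/m$. Lemma \ref{TSLEmma} says that outside an event of probability at most $e^{-m\delta_2}$ coming from Lemma \ref{ChrnofNumEdge},
\[
\tfrac{1}{m}\log|T_S| \;\le\; H_{\text{empirical}}(S) \;-\; \tfrac{\log n!}{m} \;+\; \tfrac{C_U}{m},
\]
so the feasibility constraint $\mathbb{P}_S(\tfrac{1}{m}\log|T_S|>\gamma)\le\epsilon$ in the definition of $M(\epsilon)$ reduces, up to an absorbable exponentially small slack, to a tail bound on the empirical entropy $H(\hat p)$ shifted by $\log(n!)/m$.

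The next step is a delta-method-style reduction of $H(\hat p)$ to the negative log-likelihood. Using the elementary identity $\log\tfrac{1}{p(X_i)}-H(p) = (X_i-p)\log\tfrac{1-p}{p} = H'(p)(X_i-p)$ for $X_i\sim\mathrm{Ber}(p)$, a short calculation yields
\[
mH(\hat p) \;=\; -\log\mathbb{P}_S(S) \;-\; m D(\hat p\,\|\,p),
\]
so $mH(\hat p)\le -\log\mathbb{P}_S(S)$ pointwise, with a KL gap of order $\mathcal{O}(m(\hat p-p)^2)$ which is $\mathcal{O}_{\mathbb{P}}(1)$. Hence $\mathbb{P}_S(H(\hat p)>\gamma')\le \mathbb{P}_S(-\tfrac{1}{m}\log\mathbb{P}_S(S)>\gamma')$, and the right-hand side is the tail of a sample mean of $m$ bounded i.i.d.\ random variables with mean $H(p)$ and variance $\sigma^2(p)$. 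Routing through $-\log\mathbb{P}_S(S)$ rather than Taylor-expanding $H(\hat p)$ directly is what keeps the remainder at $\mathcal{O}(1/m)$ in the rate rather than $\mathcal{O}(1/\sqrt{m})$.

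The third step is a direct application of the Berry--Esseen inequality, which is available because $|\log 1/p(X)|$ is bounded:
\[
\Bigl|\mathbb{P}_S\bigl(-\tfrac{1}{m}\log\mathbb{P}_S(S) > H(p) + \tfrac{\sigma(p)}{\sqrt{m}}z\bigr) - Q(z)\Bigr| \;\le\; \tfrac{C}{\sqrt{m}}.
\]
Setting this tail equal to $\epsilon$ and using the local Lipschitz continuity of $Q^{-1}$ at a fixed $\epsilon\in(0,1)$ gives a feasible $z=Q^{-1}(\epsilon)+\mathcal{O}(1/\sqrt{m})$; multiplying by $\sigma(p)/\sqrt{m}$ costs only $\mathcal{O}(1/m)$, so the choice
\[
\gamma \;=\; H(p) + \tfrac{\sigma(p)}{\sqrt{m}}Q^{-1}(\epsilon) - \tfrac{\log n!}{m} + \mathcal{O}\!\left(\tfrac{1}{m}\right)
\]
is admissible. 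Since the graphical structures partition into at most $m+1$ type classes (indexed by edge count) and each admitted class has size at most $2^{m\gamma}$, the defining sum obeys $M(\epsilon)\le (m+1)\,2^{m\gamma}$, so dividing $\lceil\log M(\epsilon)\rceil$ by $m$ reproduces $\gamma$ plus a remainder of the stated order.

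The hardest part will be the bookkeeping: the Berry--Esseen $\mathcal{O}(1/\sqrt{m})$ slack on the probability, the KL remainder in $mH(\hat p)$, the additive constants from Lemma \ref{TSLEmma}, the exponentially small exceptional event from Lemma \ref{ChrnofNumEdge}, the logarithmic overhead from counting the type classes, and the ceiling in Theorem~1 all have to be shown to collapse into the single $\mathcal{O}(1/n^2)$ fourth-order remainder claimed in the theorem.
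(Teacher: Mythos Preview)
Your overall architecture matches the paper's: fix a threshold $\gamma$, show $\mathbb{P}_S(\tfrac{1}{m}\log|T_S|>\gamma)\le\epsilon$ via Lemma~\ref{TSLEmma} plus Berry--Ess\'een, then bound $M(\epsilon)$. Two points of comparison are worth recording.

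\textbf{Berry--Ess\'een step.} Your route through the identity $mH(\hat p)=-\log p(X^m)-mD(\hat p\|p)$ and the resulting inequality $H_{\text{empirical}}(S)\le -\tfrac{1}{m}\sum_i\log p(x_i)$ is actually cleaner than what the paper writes: the paper defines $S_m$ with summands $-\log q_S(x_i)$ and invokes Berry--Ess\'een directly, but those summands depend on the whole sample through $q_S$ and are not literally i.i.d. Your reduction to the true negative log-likelihood makes the application of Berry--Ess\'een unimpeachable, at no cost in the remainder. So on this step your proposal is a strict improvement in rigor while arriving at the same $\gamma$.

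\textbf{The $M(\epsilon)$ bound.} Here your shortcut $M(\epsilon)\le (m{+}1)\,2^{m\gamma}$ (at most $m{+}1$ type classes, each of size $\le 2^{m\gamma}$) is where your proof diverges substantively from the paper, and it does not quite deliver the stated order. It yields
\[
\tfrac{1}{m}\log M(\epsilon)\ \le\ \gamma+\tfrac{\log(m+1)}{m}\ =\ \gamma+\mathcal{O}\!\left(\tfrac{\log n}{n^2}\right),
\]
so the final remainder is $\mathcal{O}(\log n/n^2)$, not the claimed $\mathcal{O}(1/n^2)$. The paper avoids this $\log n$ loss by a slicing/volume argument (Lemmas~\ref{fIsLipschitz} and~\ref{rhoIsLipschitz}): it partitions the admitted range of $f(S)$ into width-$\Delta=1/m$ shells, shows via Lipschitzness that each shell contains $O(1)$ type classes, and then sums the geometric series to get $\log M(\epsilon)\le m\gamma+O(1)$. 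In the one-parameter Bernoulli case you can recover this more simply by observing that $\j\mapsto\log\binom{m}{\j}$ is concave, so each dyadic level set of $|T_\j|$ contains at most two values of $\j$; summing $2\cdot 2^{m\gamma-i}$ over $i\ge 0$ gives $M(\epsilon)\le 4\cdot 2^{m\gamma}$ without the $(m{+}1)$ factor. Either refinement is needed if you want to match the theorem's $\mathcal{O}(1/n^2)$ exactly; as written, your bookkeeping step would collapse only to $\mathcal{O}(\log n/n^2)$.
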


\section{Proof of Theorem}
\label{ThmProof}
When it is clear from the context, we omit the underlying distribution and denote $H:=H(p)$ and $\sigma:=\sigma(p)$. For a constant $A>0$ defined in the Berry-Ess\'een Lemma \ref{berryLemma} (See Appendix \ref{Sec::beAPSec}) which is independent of $n$, let
\begin{equation}
\gamma = H + \frac{\sigma}{\sqrt{m}}Q^{-1}\left(\epsilon-\frac{A}{\sqrt{m}}-e^{-m\delta_2}\right) - \frac{\log n!}{m} +\frac{C_U}{m}
\end{equation}
where $\delta_2, C_U$ are the constants in Lemmas \ref{ChrnofNumEdge} and \ref{TSLEmma}, respectively. Denote
\begin{equation*}
p_{\gamma}  := \mathbb{P}_S\left(\log{|T_S| > m\gamma}\right).
\end{equation*}
Let $q_S(X)$ be a derived Bernoulli distribution from the structure $S$, such that $q_S\left(0\right)=\frac{\j^c}{m}$ and $q_S\left(1\right) = \frac{\j}{m}$. It is clear that
\begin{equation}
\label{entropySnEquiv}
H_{\text{empirical}}(S) = \frac{1}{m} \sum_{i=1}^{m} {-\log q_S(x_i)}.
\end{equation}
Let
\begin{equation}
\label{snEquation}
S_m = \frac{1}{\sigma(p)\sqrt{m}} \sum_{i=1}^{m} {\left(-\log{q_S(x_i)}-H(p)\right)}.
\end{equation}

Recall $\mathcal{E}$ from Subsection \ref{subsectionTYPECLASSSIZE}. We have
\begin{align}
p_{\gamma}  &= \mathbb{P}_S\left(\log{|T_S|}>m\gamma | \mathcal{E}^c\right) \mathbb{P}_G\left(\mathcal{E}^c\right) + \mathbb{P}_S\left(\log{|T_S|}>m\gamma | \mathcal{E}\right) \mathbb{P}_G\left(\mathcal{E}\right) \label{totalProb}\\
&\leq \mathbb{P}_S \left(H_{\text{empirical}}(S)  >  H(p) + \frac{\sigma(p)}{\sqrt{m}} Q^{-1}\left(\epsilon-\frac{A}{\sqrt{m}}-e^{-m\delta_2}\right)\right) + e^{-m\delta_2} \label{tooUpperB} \\
            &= \mathbb{P}\left(S_m >  Q^{-1}\left(\epsilon-\frac{A}{\sqrt{m}}-e^{-m\delta_2}\right)\right) + e^{-m\delta_2} \label{shtog} \\
            &\leq Q\left( Q^{-1}\left(\epsilon-\frac{A}{\sqrt{m}}-e^{-m\delta_2}\right)\right) +\frac{A}{\sqrt{m}} + e^{-m\delta_2} \label{Berrrr} \\
            &= \epsilon \nonumber
\end{align}
where (\ref{totalProb}) follows from the law of total probability, (\ref{tooUpperB}) follows from upper bounding $\mathbb{P}(\mathcal{E}^c)$ and $\mathbb{P}(\log{|T_G|}>m\gamma | \mathcal{E})$ by 1 in conjunction with Lemma \ref{TSLEmma}, (\ref{shtog}) is from the definitions (\ref{snEquation}) and (\ref{entropySnEquiv}), and finally (\ref{Berrrr}) is from the Berry-Esseen theorem \cite{verdulossless} (See Appendix \ref{Sec::beAPSec}). We now bound $M(\epsilon)$ using (\ref{mEq}) with this choice of $\gamma$. Let
\begin{equation}
\label{fgRep}
f(S) = H_{\text{empirical}}(S) - \frac{\log{n!}}{m} + \frac{C_U}{m}.
\end{equation}
Similarly, with an abuse of overloaded notation \footnote{The two definitions of the function $f(\cdot)$ should be distinguished based upon their arguments.}, for any Bernoulli distribution $p\in\mathcal{B}$, define
\begin{equation}
\label{fgRepII}
f(p) = H(p) - \frac{\log{n!}}{m} + \frac{C_U}{m}.
\end{equation}

The rest of the proof is similar to \cite{oliverLalithaJournal}, however, we continue the proof for completeness.

\begin{Lemma} \cite{oliverLalithaJournal}
\label{fIsLipschitz}
There exists a Lipschitz constant $K_0$ independent of $n$, such that for any two Bernoulli distributions $p, \tilde{p}\in\mathcal{B}$,
\begin{equation}
|f(p)-f(\tilde{p})|\leq K_0 \|p-\tilde{p}\|.
\end{equation}
\end{Lemma}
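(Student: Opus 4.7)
The plan is to reduce the claim to Lipschitz continuity of the binary entropy function. Since the terms $-\frac{\log n!}{m}$ and $\frac{C_U}{m}$ in the definition of $f(p)$ in (\ref{fgRepII}) do not depend on the underlying Bernoulli parameter, they cancel in the difference, leaving $|f(p) - f(\tilde p)| = |H(p) - H(\tilde p)|$. Thus the lemma reduces to showing that the binary entropy $H(p) = -p\log p - (1-p)\log(1-p)$ is Lipschitz on the relevant domain, with a constant independent of $n$.

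Next, I would differentiate to obtain $H'(p) = \log\frac{1-p}{p}$. On any interval $[\alpha, 1-\alpha] \subset (0,1)$ with $\alpha > 0$ fixed, this derivative is uniformly bounded in absolute value by $\log\frac{1-\alpha}{\alpha}$. The mean value theorem then yields
\begin{equation*}
|H(p) - H(\tilde p)| \leq \log\tfrac{1-\alpha}{\alpha}\,|p - \tilde p|.
\end{equation*}
For Bernoulli distributions over $\{0,1\}$, the norm $\|p - \tilde p\|$ (whether taken as total variation or as $\ell_1$ on the mass function) reduces up to a constant factor to $|p - \tilde p|$, so this gives the desired Lipschitz inequality with $K_0$ depending only on $\alpha$ and therefore independent of $n$.

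The main obstacle is that $H'(p)$ blows up as $p \to 0$ or $p \to 1$, so strictly speaking $H$ is not Lipschitz on all of $\mathcal{B}$. The resolution, implicit in the paper's setup, is that the lemma is applied only when both arguments lie in a fixed neighborhood of the true edge probability: $p$ is an interior parameter in $(0,1)$, and the empirical parameter $q_S$ is, by Lemma \ref{ChrnofNumEdge}, concentrated near $p$ with high probability. Restricting attention to such a closed subinterval bounded away from $\{0,1\}$ makes the Lipschitz constant $K_0$ uniform in $n$, which is all that is subsequently used. If a fully uniform statement over $\mathcal{B}$ is desired, one can instead invoke a Fannes-type modulus of continuity $|H(p) - H(\tilde p)| \leq h_2(\|p - \tilde p\|)$, which is adequate for controlling the small deviations that arise in the rate analysis.
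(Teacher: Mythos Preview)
The paper does not give its own proof of this lemma; it simply cites \cite{oliverLalithaJournal}. So there is no in-paper argument to compare against, and your reduction---cancel the $n$-dependent constants and reduce to Lipschitz continuity of the binary entropy---is exactly the right way to see what the statement amounts to.

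Your identification of the obstacle is also correct and worth emphasizing: the binary entropy $H$ is \emph{not} Lipschitz on all of $[0,1]$ because $H'(p)=\log\frac{1-p}{p}$ diverges at the endpoints, so the lemma as literally stated (for all $p,\tilde p\in\mathcal{B}$) is false. Your proposed fix---restrict to a closed subinterval $[\alpha,1-\alpha]$ bounded away from $\{0,1\}$---is the standard remedy and is what the cited reference effectively does in the general finite-alphabet setting. One caution: in the way the lemma is invoked here (equations (\ref{useLipschoys})--(\ref{sizeTau})), the index $i$ ranges over all of $\{0,1,2,\dots\}$, so the shells $\mathcal{A}_i$ sweep $f(p)$ down toward zero and hence $p$ toward the endpoints; the restriction to a subinterval therefore has to be justified either by noting that only finitely many shells are nonempty for structures on $n$ vertices, or by arguing that the contribution of shells near the boundary is already controlled by the exponential decay $2^{-mi\Delta}$ and a crude count. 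Either route works, but your write-up should make explicit which one you are taking rather than leaving it implicit. The Fannes-type alternative you mention would give a modulus $h_2(\cdot)$ rather than a linear bound, which is not directly what (\ref{useLipschoys}) uses, so if you go that way you would need to rework the volume estimate slightly.
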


Fixing $\Delta=\frac{1}{m}$, we have
\begin{align}
M(\epsilon) &\leq \sum_{\substack{S\in\mathcal{S}(n):\\ \frac{1}{m}\log{|T_{S}|}\leq \gamma}}{|T_{S}|} \nonumber \\
 &\leq \sum_{\substack{S\in\mathcal{S}(n):\\ f(S) -\frac{C_d}{m}\leq \gamma}}{2^{mf(S)}} \nonumber \\
            &= \sum_{i=0}^{\infty}\sum_{\substack{S\in\mathcal{S}(n):\\ f(S)\in\mathcal{A}_i }} {2^{mf(S)}} \nonumber \\
						&\leq \sum_{i=0}^{\infty}\left|\left\{S\in\mathcal{S}(n):f(S)\in\mathcal{A}_i\right\}\right| \cdot  2^{m\gamma+C_d - mi\Delta} \label{mEpsilon}
\end{align}
where $C_d=C_U-C_L$ and $\mathcal{A}_i = (\gamma +\frac{C_d}{m}-(i+1)\Delta, \gamma +\frac{C_d}{m}-i\Delta]$. For a Bernoulli distribution $p\in\mathcal{B}$, define its 2-norm ball of radius $r$ as $B_{r}(p) = \{p'\in \mathcal{B}: \|p-p'\|\leq r\}$. By extension, for a graphical structure $S\in\mathcal{S}(n)$, define its 2-norm ball of radius $r$ as $B_r(S) := B_r(q_S)$, where $q_S$ is the derived empirical distribution of $S$ as defined at the beginning of this section. Note that for any two different structures $S_1, S_2 \in \mathcal{S}(n)$, $B_{\frac{1}{2m}}(S_1)$ and $B_{\frac{1}{2m}}(S_2)$ are disjoint. Moreover, observe that $\text{Vol}\left(B_{\frac{1}{2m}}(S)\right) = \frac{1}{m}$.
We have
\begin{align}
\Big|\{S\in\mathcal{S}(n):f(S)\in\mathcal{A}_i\}\Big|&=\sum_{\substack{S\in\mathcal{S}(n):\\f(S)\in\mathcal{A}_i}}\frac{\text{Vol}\left(B_{\frac{1}{2m}}(S)\right)}{\frac{1}{m}} \nonumber \\
&= m\sum_{\substack{S\in\mathcal{S}(n):\\f(S)\in\mathcal{A}_i}}{\text{Vol}\left(B_{\frac{1}{2m}}(S)\right)} \nonumber \\
&= m{\text{Vol}\left(\bigcup_{\substack{S\in\mathcal{S}(n):\\f(S)\in\mathcal{A}_i}}B_{\frac{1}{2m}}(S)\right)} \label{disjointnessEqw} \\
&\leq m{\text{Vol}\left(\bigcup_{\substack{p\in\mathcal{B}:\\f(p)\in\mathcal{A}_i}}B_{\frac{1}{2m}}(p)\right)} \label{contiskdjh}
\end{align}
where (\ref{disjointnessEqw} is from disjointness of the balls. Let $\rho(\lambda)=\text{Vol}\{p\in\mathcal{B}:f(p)\leq \lambda\}$. The following lemma from \cite{oliverLalithaJournal}, shows the Lipschitzness of $\rho(\cdot)$.
\begin{Lemma}\cite{oliverLalithaJournal}
\label{rhoIsLipschitz}
There exists a Lipschitz constant $K_1$ such that for all $a,b$,
\begin{equation*}
|\rho(a)-\rho(b)|\leq K_1|a-b|.
\end{equation*}
\end{Lemma}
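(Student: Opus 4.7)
The plan is to reduce the statement to an explicit computation involving the binary entropy function and then read off Lipschitzness from a closed-form expression for $\rho$. Since $f(p) = H(p) + C$ with $C = -\log(n!)/m + C_U/m$ independent of $p$, shifting the argument reduces the lemma to proving Lipschitzness of $\tilde\rho(\mu) := \mathrm{Vol}\{p \in [0,1] : H(p) \leq \mu\}$, where $H(p) = -p \log p - (1-p) \log(1-p)$ is the binary entropy and $[0,1]$ parameterizes $\mathcal{B}$.

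Using that $H$ is symmetric about $1/2$, strictly increasing on $[0,1/2]$ and strictly decreasing on $[1/2,1]$, with $H(0) = 0$ and $H(1/2) = \log 2$, the sublevel set for $\mu \in [0, \log 2)$ decomposes as $[0, p_\mu] \cup [1 - p_\mu, 1]$, where $p_\mu := H|_{[0,1/2]}^{-1}(\mu)$. This yields the closed form $\tilde\rho(\mu) = 2 p_\mu$ on this range, with $\tilde\rho \equiv 0$ for $\mu \leq 0$ and $\tilde\rho \equiv 1$ for $\mu \geq \log 2$. By the inverse function theorem, $\tilde\rho$ is differentiable on $(0, \log 2)$ with $\tilde\rho'(\mu) = 2 / H'(p_\mu) = 2 / \log\!\bigl((1-p_\mu)/p_\mu\bigr)$, which is explicitly bounded on any subinterval of the form $[0, \log 2 - \eta]$.

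The main obstacle is the behavior near $\mu = \log 2$: $H'(p)$ vanishes at $p = 1/2$, so $\tilde\rho'(\mu)$ blows up, with the asymptotic $1 - \tilde\rho(\mu) \sim \sqrt{2(\log 2 - \mu)}$. To produce a single uniform Lipschitz constant, I would follow the splitting argument of \cite{oliverLalithaJournal}: apply the explicit derivative bound on $[0, \log 2 - \eta]$, and on the tail exploit the monotonicity of $\tilde\rho$ together with its saturation at $\tilde\rho(\log 2) = 1$ to absorb the increment into the constant. Combined with the Lipschitzness of $f$ established in Lemma~\ref{fIsLipschitz} (to transport bounds back from $\tilde\rho$ to $\rho$), this should yield a single $K_1$ independent of $n$. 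The delicate step—and the main technical difficulty—is matching the two regimes quantitatively at the cusp so that the resulting constant is genuinely uniform over all $a,b$, rather than deteriorating as $a,b$ approach $\log 2 - C$.
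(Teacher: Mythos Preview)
Your computation is correct up to and including the asymptotic $1-\tilde\rho(\mu)\sim c\sqrt{\log 2-\mu}$, and that is precisely where the proposal breaks down rather than where it can be patched. A function with a square-root cusp is \emph{not} Lipschitz: for $a,b$ both in $(\log 2-\eta,\log 2)$ one has $|\tilde\rho(a)-\tilde\rho(b)|/|a-b|\asymp 1/\sqrt{\log 2-\max(a,b)}\to\infty$. The ``splitting argument'' you sketch---bounding the derivative on $[0,\log 2-\eta]$ and invoking monotonicity plus saturation on the tail---can only yield $|\tilde\rho(a)-\tilde\rho(b)|\le 1-\tilde\rho(\log 2-\eta)$ on the tail, which is a fixed number, not $K_1|a-b|$. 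No choice of $\eta$ or $K_1$ makes the two pieces match uniformly; the difficulty you flag in your last sentence is not ``delicate'' but insurmountable, because the conclusion as literally stated (``for all $a,b$'') is false in the Bernoulli case.

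The paper does not prove this lemma; it imports it verbatim from \cite{oliverLalithaJournal}, where the ambient simplex has dimension $k-1$ for an alphabet of size $k$. For $k\ge 3$ the volume of the superlevel set near the maximum scales like $(H_{\max}-\mu)^{(k-1)/2}$, whose derivative in $\mu$ is bounded, so the cited lemma holds there; for $k=2$ the exponent is $1/2$ and it fails. What actually suffices for the present proof is much less: $\rho$ is only ever evaluated at arguments $\le \gamma+\tfrac{C_d}{m}+\Delta+\tfrac{K_0}{2m}$, and since $\gamma=H(p)+O(m^{-1/2})$ with $H(p)<\log 2$ whenever $p\neq 1/2$, these arguments stay in a fixed interval $(-\infty,\log 2-\eta_p]$ for large $n$. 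On that interval your explicit derivative bound $\tilde\rho'(\mu)=2/\log\!\bigl((1-p_\mu)/p_\mu\bigr)$ is finite and gives the needed inequality. So the right fix is not to manufacture a global $K_1$, but to state and prove the local version and note that the application never leaves that region (the endpoint $p=1/2$ has zero varentropy and is handled separately). As a minor aside, passing from $\tilde\rho$ to $\rho$ is a pure shift by the constant $-\log n!/m+C_U/m$ and does not require Lemma~\ref{fIsLipschitz}.
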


We continue from (\ref{contiskdjh}). Let $a = \gamma +\frac{C_d}{m}-(i+1)\Delta$. We have
\begin{align}
\Big|\Big\{S\in\mathcal{S}(n):f(S)\in \mathcal{A}_i\Big\}\Big| &\leq m\cdot \text{Vol}\left(\bigcup_{ a<f(p)\leq a+\Delta}{B_{\frac{1}{2m}}(p)}\right) \nonumber \\
&\leq m\text{Vol}\left(\Big\{p:f(p)\in \left(a- \frac{K_0}{2m},a+\Delta+ \frac{K_0}{2m}\right)\Big\}\right) \label{useLipschoys} \\
&= m\left(\rho\left(a+\Delta+\frac{K_0}{2m}\right)-\rho\left(a-\frac{K_0}{2m}\right)\right) \nonumber \\
&\leq mK_1K_0\cdot \left(\Delta+\frac{K_0}{m}\right) \label{sizeTau}
\end{align}
where (\ref{useLipschoys}) is from the observation that for any $\tilde{p}\in B_{\frac{1}{2m}}(p)$, $|f(\tilde{p})-f(p)|\leq \frac{K_0}{2m}$ and (\ref{sizeTau}) is from Lemma \ref{rhoIsLipschitz}. Applying (\ref{sizeTau}) to (\ref{mEpsilon}), we obtain
\begin{align*}
M(\epsilon) &\leq \sum_{i=0}^{\infty}{ mK_1K_0\cdot \left(\Delta+\frac{K_0}{m}\right)\cdot 2^{m\gamma +C_d-mi\Delta}} \nonumber \\
            &= mK_1K_0\cdot \left(\Delta+\frac{K_0}{m}\right)\cdot 2^{m\gamma+C_d}\cdot \frac{1}{1-2^{-m\Delta}}.
\end{align*}
Since $\Delta=\frac{1}{m}$, we have
\begin{align*}
\log {M(\epsilon)}%&\leq m\gamma-\log m+\frac{1}{2}\log{n}+\log{\left((1+K_0)\frac{1}{1-e^{-m\Delta}}\right)} + C_d \nonumber \\
					&\leq \log\left(K_1K_0(K_0+1)\right) +m\gamma  + C_d + 1 \nonumber \\				&=mH(p)+\sigma\sqrt{m}Q^{-1}\left(\epsilon-\frac{A}{\sqrt{m}}-e^{-m\delta_2}\right) -\log{n!}+ C_U +C_1 \\
									&\leq mH(p)+\sigma\sqrt{m}Q^{-1}(\epsilon)-\log{n!}+C
\end{align*}
for constants $C_1,C$ independent of $n$.

\section{Conclusion and Future Work}
\label{SecConclusion}
We proposed a variant of the Type Size code for compressing graphical structures. Erd\"{o}s-R\'enyi model is adopted as the underlying mechanism for generating the random graphs. We provided an analysis to derive the fine asymptotics of the overflow rate of the proposed Type Size code for compressing such structures. However, Erd\"{o}s-R\'enyi model fails to fully represent the real-world networks such as the world wide web. The alternative models include the power law and the preferential attachment models. We study the finite blocklength compression of graphical structures not generated by the Erd\"{o}s-R\'enyi model as a future work. The multigraph version of the problem, which permits multiple edges between pairs of nodes and the lossy version of the problem are also interesting future directions of this research.
{\bibliographystyle{IEEEtran}
\bibliography{reputation}}

\begin{appendices}

\section{Proof of Lemma \ref{ChrnofNumEdge}}
\label{ChernoffAppendix}
Observe that the number of edges in the Erd\"{o}s-R\'enyi graph can be derived as the sum of $m$ i.i.d. random variables $X_i$, defined in Subsection \ref{subsec::TSCSIze}, i.e. $\j(S) = \sum_{i=1}^{m} X_i$. For positive $\alpha,t>0$, from Chernoff bound \cite{hoeffdingW}, we have
\begin{equation}
\mathbb{P}_S\left(\j(S)\leq \alpha\right) \leq \min_{t>0} e^{t\alpha} \prod_{i}{\mathbb{E}\left(e^{-tX_i}\right)}.
\end{equation}
On the other hand
\begin{align}
\mathbb{E}\left(e^{-tX_i}\right) &= pe^{-t} + (1-p) \nonumber \\
                      &= 1+ p\left(e^{-t}-1\right) \nonumber \\
                      &\leq e^{p(e^{-t}-1)} \label{lnxXminOne}
\end{align}
where (\ref{lnxXminOne}) follows from $\ln x\leq x-1$, for any positive $x>0$. Hence
\begin{equation}
\mathbb{P}_S(\j(S)\leq \alpha) \leq \min_{t>0} e^{t\alpha}\cdot e^{mp\left(e^{-t}-1\right)}. \label{sincEThsi}
\end{equation}
Since (\ref{sincEThsi}) holds for any $\alpha,t>0$, therefore, we may take $\alpha=(1-\delta_1)mp$ and $t= -\ln(1-\delta_1)>0$, for an arbitrary $0< \delta_1 < 1$. Subsequently, we obtain
\begin{align}
\mathbb{P}_S\left(\j(S)\leq (1-\delta_1)mp\right) &\leq \left(\frac{1}{1-\delta_1}\right)^{\left(1-\delta_1\right)mp} e^{-mp\delta_1} \nonumber \\
                                 &= e^{-m\delta_2}
\end{align}
where $\delta_2 = -\delta_1 - (1-\delta_1)\ln(1-\delta_1)>0$ is a positive constant.

\section{Proof of Lemma \ref{TSLEmma}}
\label{AppendixTSLEmma}
Utilizing Theorem \ref{emWrightThm}, we proceed by providing tight upper and lower bounds for $\Lambda(n,\j)$ in (\ref{LambdaEquatidsjh}).
\textbf{Upper Bound}:\\
Recall ${\j}^c={n\choose 2}-\j$. Using the Stirling's formula \cite{imrebook}, we have
\begin{align}
\log{{n\choose2}\choose{\j}}&=\log\frac{{n\choose2} !}{\j!\left({n\choose 2} -\j\right)!} \nonumber \\
                       &\leq \log\frac{\sqrt{2\pi} {n\choose 2} ^{{n\choose 2}+\frac{1}{2}}2^{-{n\choose 2}+\frac{1}{12{n\choose 2}}}} {\left(\sqrt{2\pi}{\j}^{\j+\frac{1}{2}}2^{-\j+\frac{1}{12(\j+1)}}\right)\left(\sqrt{2\pi}{\j^c}^{\j^c+\frac{1}{2}}e^{-\j^c+\frac{1}{12(\j^c+1)}}\right)} \nonumber \\
											&= \left({n\choose 2}+\frac{1}{2}\right)\log{{n\choose 2}}-{n\choose 2}+\frac{1}{12{n\choose 2}}-\left(\j+\frac{1}{2}\right)\log \j + \j -\frac{1}{12(\j+1)} -\left({\j}^c+\frac{1}{2}\right)\log{{\j}^c} + {\j}^c - \frac{1}{12 ({\j}^c+1)} - \log{\sqrt{2\pi}}\nonumber \\
											&\leq\left({n\choose 2}+\frac{1}{2}\right)\log{{n\choose 2}}-\left(\j\log{\j}+{\j}^c\log {\j}^c\right)-\frac{1}{2}\left(\log{\j}+\log {\j}^c\right)+\frac{1}{12}-\log{\sqrt{2\pi}}.\label{firstPartosUpperBound}
\end{align}
Recall $q_S\in\mathcal{B}$, a derived Bernoulli distribution with $q_S\left(X=0\right)=\frac{{\j}^c}{{n\choose 2}}$ and $q_S\left(X=1\right)=\frac{{\j}}{{n\choose 2}}$, with an entropy $H\left(q_S\right)=H_{\text{empirical}}(S)$. Note that
\begin{align}
-{\j}\log{\j}-{\j}^c\log{{\j}^c} &= -{n\choose 2}\frac{{\j}}{{n\choose 2}}\log{\frac{{\j}}{{n\choose 2}}}-{n\choose 2}\frac{{\j}^c}{{n\choose 2}}\log{\frac{{\j}^c}{{n\choose 2}}} -{\j}\log{{{n\choose 2}}}-{\j}^c\log{{{n\choose 2}}} \nonumber \\
											&={n\choose 2}H_{\text{empirical}}(S)-{n\choose 2}\log{{n\choose 2}} \label{secondPatOfUper}.
\end{align}
Without loss of generality, we assume $0<\j,{\j}^c<{n\choose 2}$. This can be accommodated by adding two extra header bits to the compressed data stream, to indicate if the graph is empty, complete, or else. Hence, we have
\begin{equation}
\label{mBoundjjc}
\log{\j} + \log{\j^c} \geq \log{{n\choose 2}}-1.
\end{equation}
(\ref{firstPartosUpperBound}) in conjunction with (\ref{secondPatOfUper}, \ref{mBoundjjc}), gives the upper bound.\\
\textbf{Lower Bound}: \\
Using the Stirling's formula \cite{imrebook}, we have
\begin{align*}
\log{{n\choose2}\choose{{\j}}}&=\log\frac{{n\choose2} !}{{\j}!\left({n\choose 2} -{\j}\right)!} \nonumber \\
                       &\geq \log\frac{\sqrt{2\pi} {n\choose 2} ^{{n\choose 2}+\frac{1}{2}}2^{-{n\choose 2}+\frac{1}{12\left({n\choose 2}+1\right)}}} {\left(\sqrt{2\pi}{\j}^{\j+\frac{1}{2}}2^{-\j+\frac{1}{12\j}}\right)\left(\sqrt{2\pi}{{\j}^c}^{{\j}^c+\frac{1}{2}}2^{-{\j}^c+\frac{1}{12{\j}^c}}\right)} \nonumber \\
											&= \left({n\choose 2}+\frac{1}{2}\right)\log{{n\choose 2}}-{n\choose 2} + \frac{1}{12\left({n\choose 2} + 1\right)}-\left({\j}+\frac{1}{2}\right)\log{\j}+\j-\frac{1}{12\j}-\left({\j}^c+\frac{1}{2}\right)\log{{\j}^c}+{\j}^c-\frac{1}{12{\j}^c} -\log{\sqrt{2\pi}}\nonumber \\
											&=\left({n\choose 2}+\frac{1}{2}\right)\log{{n\choose 2}}-\left({\j}\log{\j}+{\j}^c\log {\j}^c\right)-\frac{1}{2}\left(\log{\j}+\log{\j}^c\right)-\frac{1}{6}-\log{\sqrt{2\pi}}.
\end{align*}
Moreover, note that
\begin{equation}
\log{\j} + \log{{\j}^c} \leq \frac{1}{2}\log{{n\choose 2}}.
\end{equation}
The rest of the proof is similar to the proof of the upper bound.

\section{Berry-Ess\'een Bound}
\label{Sec::beAPSec}
\begin{Lemma}\label{berryLemma}\cite{verdulossless}
Let $\{Z_i\}$ be independent and identically distributed random variables with zero mean and unit variance, and let $\tilde{Z}$ be a standard normal. Then, for a constant $A>0$ independent of $n$, all $n\geq 1$ and all $z$ we have
\begin{equation*}
\left|\mathbb{P}\left(\frac{1}{\sqrt{n}}\sum_{i=1}^{n}Z_i\leq z\right) - \mathbb{P}\left(\tilde{Z}\leq z\right)\right|\leq \frac{A}{\sqrt{n}}
\end{equation*}

\end{Lemma}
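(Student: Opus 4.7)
The plan is to prove the bound via the classical Fourier-analytic route, using the Esséen smoothing inequality together with a Taylor expansion of the characteristic function. Since the $Z_i$ are i.i.d.\ with zero mean and unit variance, I also need a finite third absolute moment $\rho := \mathbb{E}|Z_1|^3$ (which is automatic in the paper's application where the $Z_i$ are bounded functions of Bernoulli trials). The strategy is to compare the characteristic function $\phi_n(t) := \phi(t/\sqrt{n})^n$ of $W_n := \tfrac{1}{\sqrt{n}}\sum_{i=1}^n Z_i$ with $e^{-t^2/2}$, the characteristic function of the standard normal $\tilde Z$, and then transfer that comparison to the CDFs via smoothing.

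First I would invoke the Esséen smoothing inequality: for any $T>0$,
\begin{equation*}
\sup_z \bigl|\mathbb{P}(W_n \leq z) - \mathbb{P}(\tilde Z \leq z)\bigr| \leq \frac{1}{\pi}\int_{-T}^{T}\left|\frac{\phi_n(t)-e^{-t^2/2}}{t}\right|dt + \frac{24}{\pi T\sqrt{2\pi}},
\end{equation*}
where the boundary term uses $\|\Phi'\|_\infty = 1/\sqrt{2\pi}$. This is a standard inequality that I would cite rather than derive. The heart of the argument is estimating the integrand: by Taylor's theorem, $\phi(s) = 1 - s^2/2 + r(s)$ with $|r(s)| \leq \rho |s|^3/6$. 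Applying the telescoping identity $|a^n - b^n| \leq n|a-b|\max(|a|,|b|)^{n-1}$ to $a = \phi(t/\sqrt{n})$ and $b = 1 - t^2/(2n)$, together with $(1 - t^2/(2n))^n \leq e^{-t^2/2}$, yields the pointwise bound
\begin{equation*}
\bigl|\phi_n(t) - e^{-t^2/2}\bigr| \leq \frac{C\rho|t|^3}{\sqrt{n}}\, e^{-t^2/4},
\end{equation*}
valid for all $|t| \leq c\sqrt{n}/\rho$ and absolute constants $C,c>0$.

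Finally I would choose $T = c\sqrt{n}/\rho$. Dividing the pointwise bound by $|t|$ gives an integrand dominated by $C\rho\, t^2 e^{-t^2/4}/\sqrt{n}$, whose integral over $\mathbb{R}$ is a finite absolute constant times $\rho/\sqrt{n}$; and the boundary term $24/(\pi T\sqrt{2\pi})$ is also $O(\rho/\sqrt{n})$ by the choice of $T$. Summing the two contributions gives $\sup_z|\mathbb{P}(W_n \leq z) - \Phi(z)| \leq A/\sqrt{n}$ with $A$ depending only on $\rho$ (and hence independent of $n$), which is exactly the statement of the lemma.

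The main obstacle is the characteristic-function estimate at \emph{moderate} values of $|t|$ — roughly, between a fixed constant and $\sqrt{n}$ — where the cubic Taylor remainder is no longer uniformly small and a naive Taylor bound would blow up. The key move is to truncate the smoothing integral at $T \asymp \sqrt{n}/\rho$ so that the Taylor-based pointwise bound is only needed on this range, while the $24/(\pi T)$ boundary term absorbs the contribution of the tail $|t|>T$ at the same $1/\sqrt{n}$ rate. A second subtlety is justifying the multiplicative factor $e^{-t^2/4}$ in the pointwise bound: one must show that $|\phi(t/\sqrt{n})|$ stays close to $1-t^2/(2n)$ rather than merely being bounded by $1$, so that raising to the $n$-th power produces genuine Gaussian decay and not merely a polynomial bound; this is what makes the integral against $dt/|t|$ converge uniformly in $T$.
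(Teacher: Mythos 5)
The paper does not actually prove this lemma; it is imported wholesale from \cite{verdulossless} as a black-box tool, so your Fourier-analytic derivation is not competing with anything in the text but rather supplying a proof where none is given. Your sketch is the standard Ess\'een-smoothing route and the main steps are right: Taylor-expand $\phi(t/\sqrt n)$ to third order, use the telescoping inequality $|a^n-b^n|\le n|a-b|\max(|a|,|b|)^{n-1}$ to propagate the cubic error through the $n$-th power while retaining Gaussian decay, truncate the smoothing integral at $T\asymp\sqrt n/\rho$, and balance the integral term against the $24/(\pi T)$ boundary term. You also correctly flag the one genuinely delicate point, namely that $|\phi(t/\sqrt n)|$ must be shown to stay below something like $1-t^2/(4n)$ on the truncated range so that $\max(|a|,|b|)^{n-1}$ contributes $e^{-t^2/4}$ rather than just $1$. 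There is, however, one thing worth stating more forcefully than you do: the lemma \emph{as written in the paper} omits the hypothesis $\mathbb{E}|Z_1|^3<\infty$, and without it the conclusion is simply false --- zero mean and unit variance alone do not yield any uniform $O(1/\sqrt n)$ rate (one can construct i.i.d.\ sequences with unit variance for which the Kolmogorov distance to normal decays arbitrarily slowly). Your parenthetical that the third moment is automatic in the paper's application (where each $Z_i$ is a bounded function of a Bernoulli variable, so in fact $\rho\le\max\{\log\frac1p,\log\frac1{1-p}\}^3/\sigma^3<\infty$) is exactly the right repair, and it would be cleaner to say explicitly that the constant $A$ depends on $\rho$; the statement ``$A$ independent of $n$'' is then correct but ``$A$ independent of the law of $Z_1$'' would not be.
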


\end{appendices}

\end{document}